\newcommand{\ot}{\otimes}
\newcommand{\id}{\mathrm{id}}
\newcommand{\locc}{\textsc{LOCC}}
\newtheorem{theorem}{Theorem}
\newtheorem{lemma}{Lemma}
\newcommand{\chg}{\textcolor{black}}
\newcommand{\chgR}{\textcolor{black}}
\begin{document}

\title{Catalytic quantum teleportation and beyond}
\author{Patryk Lipka-Bartosik}
 \affiliation{H. H. Wills Physics Laboratory, University of Bristol, Tyndall Avenue, Bristol, BS8 1TL, United Kingdom}
\author{Paul Skrzypczyk}
\affiliation{H. H. Wills Physics Laboratory, University of Bristol, Tyndall Avenue, Bristol, BS8 1TL, United Kingdom}

\date{\today}

\begin{abstract}

\chg{In this work we address fundamental limitations of quantum teleportation -- the process of transferring quantum information using classical communication and preshared entanglement. We develop a new teleportation protocol based upon the idea of using ancillary entanglement catalytically, i.e. without depleting it. This protocol is then used to show that catalytic entanglement allows for a noiseless quantum channel to be simulated with a quality that could never be achieved using only entanglement from the shared state, \chgR{even for catalysts with a small dimension}. On the one hand, this allows for a more faithful transmission of quantum information using generic states and fixed amount of consumed entanglement. On the other hand, this shows, for the first time, that entanglement catalysis provides a genuine advantage in a generic quantum-information processing task. Finally, we show that similar ideas can be directly applied to study quantum catalysis for more general problems in quantum mechanics. As an application, we show that catalysts can activate so-called passive states, a concept that finds widespread application e.g. in quantum thermodynamics.}


\end{abstract}

\keywords{}
\maketitle
\chg{\textbf{\textit{Introduction.---}}} Quantum entanglement leads to correlations between distant particles that cannot be explained by any classical mechanism \cite{PhysRev.47.777,bell1964einstein,Horodecki_2009}. This intricate phenomenon is nowadays seen as an indispensable resource with an enormous number of modern applications. One of the most important applications of entanglement is quantum teleportation \cite{Bennet1993}, a communication task that uses a pair of maximally entangled qubits  $\langle qq \rangle$ and two bits of communication $[c \rightarrow c]$ to simulate a noiseless quantum channel $[q\rightarrow q]$,
\begin{align}
    \label{eq:res_ineq}
    \langle qq \rangle + 2 [c \rightarrow c] \geq [q \rightarrow q].
\end{align}
The significance of the protocol is best evidenced by its widespread applicability in various areas of quantum information \cite{Pirandola2015,Ishizaka_2008,Lo_Franco_2018,Deutsch_2000}, computation \cite{Brassard_1998,Gottesman_1999,PhysRevLett.86.5188} and even general relativity \cite{PhysRevLett.106.040403,Horowitz_2004,Gottesman_2004,Lloyd_2014}. Quantum teleportation has been realised in laboratories using a variety of different technologies, including photonic qubits \cite{Bouwmeester_1997,Ursin2004,Boschi_1998,Jin2010,PhysRevLett.86.1370,PhysRevLett.92.047904}, optical modes \cite{Furusawa_1998,PhysRevA.67.032302,PhysRevA.77.022314}, nuclear magnetic resonance \cite{Nielsen_1998}, atomic ensembles \cite{Sherson_2006,Bao20347,Chen_2008}, trapped atoms \cite{Barrett2004,Olmschenk_2009,PhysRevLett.110.140403} or solid-state systems \cite{Gao_2013,Steffen_2013,Pfaff_2014}.

Realistic teleportation protocols use generic entangled states and therefore the quantum channels they simulate are inevitably noisy. In terms of the teleportation inequality (\ref{eq:res_ineq}) this means that substituting $\langle qq \rangle$ with a generic bipartite state $\langle \rho \rangle$ leads to a quantum channel that is no longer noiseless and has to be replaced with a general teleportation channel $\mathcal{N}$. A central problem of fundamental and practical significance is engineering teleportation protocols that simulate as faithfully as possible noiseless quantum channels, as measured by a natural figure of merit, the average fidelity of teleportation \cite{Popescu1994}.

The teleportation inequality (\ref{eq:res_ineq}) is perhaps the best evidence for the resource-like nature of entanglement, as it guarantees that simulating a noiseless quantum channel always consumes a pair of maximally entangled qubits. Therefore it is reasonable to expect that, in a general protocol, teleportation fidelity can only be increased at the expense of using more entanglement. Interestingly, quantum mechanics allows for a very bizarre use of entanglement, one that is already helpful without entanglement being consumed or degraded. This surprising phenomenon is called \emph{quantum catalysis} and was introduced in \cite{Jonathan_1999}, further analysed in \cite{Turgut_2007,PhysRevA.64.042314,vanDam2003,duarte2015selfcatalytic,Aubrun_2007,Aubrun_2009,PhysRevA.79.054302} and subsequently adapted to many physical settings \cite{Brand_o_2015,Ng_2015,PhysRevLett.113.150402,Bu_2016,PhysRevLett.121.190504,Sparaciari_2017,PhysRevX.7.041033,PhysRevX.8.041051,lipkabartosik2020states,wilming2020entropy,shiraishi2020quantum,kondra2021catalytic,ding2020amplifying,henao2020catalytic,takagi2021correlation,boes2019bypassing,boes2020variance}. Quantum catalysis demonstrates that access to a special entangled state (the catalyst) allows two distant parties to manipulate their entanglement in a way that would be otherwise impossible. Importantly, the catalyst is not consumed during the process, so that the parties can repeat their task again, or use it for another purpose. This makes catalysis a particularly interesting extension of the standard paradigm of local operations and classical communication (\locc). Indeed, quantum catalysis can be viewed as a paradigm shift that leads to the ultimate limits of quantum protocols under fixed resources. Since the catalyst appearing in a catalytic protocol is not depleted, it does not contribute to the overall balance of consumed resources. 

In this work we are interested in finding the ultimate such limit of quantum teleportation. More specifically, we ask what is the best teleportation fidelity that can be achieved when consuming a given entangled state and using an arbitrary amount of entanglement catalytically? We show that this natural extension of the standard teleportation protocol allows quantum channels with much larger teleportation fidelity to be achieved, or equivalently, for the transfer of quantum information much more reliably. More formally, we show that using a quantum catalyst one can achieve teleportation fidelity equal to a regularisation of the standard teleportation fidelity, which we then shown to be strictly larger than the standard teleportation fidelity for a wide range of pure states. This uncovers new limits for quantum teleportation under fixed resources. This can also be interpreted as providing the first example where quantum catalysis is successfully used in a generic information processing task, and opens up the prospects for much further investigation into the power and generality of quantum catalysis, beyond what had been appreciated up until now. In this line, we show that our methods can be adapted beyond quantum teleportation and quantum information.  


\textbf{\textit{Framework.---}} In what follows we 
we will be interested in scenarios involving two distant parties (say Alice and Bob) who are allowed to use local operations and classical communication (\locc). We say that $\mathcal{E} \in \locc(A:B)$ if it can be written as a sequence of quantum channels applied locally by $A$ and $B$, intertwined with classical communication. To quantify entanglement we will use the entanglement fraction \cite{PhysRevA.60.1888}, which is defined as the maximal overlap with a maximally entangled state, that is
\begin{align}
    \label{ent_fra} \nonumber
    f(\rho) :=\quad \max_{\mathcal{E}}& \quad \langle \phi^+_{AB} | \mathcal{E}(\rho_{AB})  |\phi^+_{AB}\rangle \\
    \text{s.t.} &\quad \mathcal{E} \in \locc(A:B),
\end{align}
where $\ket{\phi^+_{AB}} = \sum_{i=1}^{d} \ket{i}_A\!\ket{i}_B / \sqrt{d}$ denotes a maximally-entangled state shared between $A$ and $B$. 

{\textit{Standard quantum teleportation.---}} Before presenting our main results let us briefly recall the task of quantum teleportation \cite{Bennet1993}. In its most general form, the protocol involves two spatially separated parties, Alice and Bob, who share an arbitrary quantum state $\rho_{{AB}}$ of dimension $d_{{A}} \times d_{{B}}$. A third party, often called a referee, provides Alice with a quantum state $\varphi_R$ of dimension $d_{{R}}$ which is unknown to both parties. The goal set before Alice and Bob is to transfer the unknown state from one party to another, using only local operations and classical communication, i.e. quantum channels $\mathcal{T} \in \locc(RA:B)$,  and shared entanglement. Under this conditions all possible states which can be achieved in Bob's lab can be written as:      
\begin{align}
    \rho'_B = \tr_{RA}\mathcal{T}(\varphi_R \ot \rho_{AB}).
\end{align}
The above protocol can be viewed equivalently as a process of establishing a quantum channel between Alice and Bob that maps the input state $\varphi_R$ to the output $\rho_{B}'$. \chg{The goal of quantum teleportation is then to simulate a noiseless quantum channel between Alice and Bob, i.e. an identity map $\id_{A \rightarrow B}$}. The quality of teleportation or, equivalently, the fidelity of the resulting teleportation channel, can be quantified using the average fidelity of teleportation \cite{Popescu1994} (or simply ``fidelity of teleportation'') defined as
\begin{align}
    \label{eq:fid_tel}
    \langle F \rangle_{\rho} := \max_{\mathcal{T}} & \, \int \langle \varphi|\tr_{RA}\mathcal{T}(\varphi_R \ot \rho_{AB})|\varphi\rangle \, \text{d} \chg{\varphi} \nonumber \\
    \text{s.t.}& \quad \mathcal{T}\in \locc(RA:B).
\end{align}
The integral in (\ref{eq:fid_tel}) is computed over a uniform distribution of all pure input states $\varphi = \dyad{\varphi}$ according to a normalised Haar measure $\int \text{d} \varphi = \mathbb{1}$.   It can be easily verified that $0 \leq \langle F \rangle_{\rho} \leq 1$ for all density operators $\rho$. Furthermore, the case $\langle F\rangle_{\rho} = 1$ corresponds to perfect teleportation from Eq. (\ref{eq:res_ineq}) and is possible if and only if $\rho$ is maximally-entangled. In practice, the fidelity of teleportation will always be less than one. Furthermore, when the shared state is separable, the corresponding teleportation protocol is said to be “classical” and fidelity of teleportation is bounded by $\langle F \rangle_c := 2/( d_{R} + 1)$. Importantly, it was shown in Ref. \cite{PhysRevA.60.1888} that fidelity of teleportation (\ref{eq:fid_tel}) is related with entanglement fraction (\ref{ent_fra}) via:
\begin{align}
    \langle F \rangle_{\rho} = \frac{f(\rho) d_R+1}{d_R+1}.
\end{align}
In what follows we will focus on this quantity and show that catalysts allow to increase entanglement fraction without consuming any additional entanglement.

\textbf{\textit{Results.---}} Let us start by describing a catalytic extension of the general quantum teleportation protocol. Then we display a main theorem that gives a lower bound on its performance and show that the bound is tight enough to demonstrate a sharp advantage with respect to the standard teleportation protocol. \chg{We conclude with a simple generalization of these methods that can be used to address catalytic advantages in more general settings.} 

\textit{Catalytic quantum teleportation.---} Assume that Alice and Bob, in addition to their shared state $\rho_{AB}$, have access to a quantum system $CC'$ prepared in a state  $\omega_{CC'}$. This additional system is distributed such that Alice has access only to $C$, and Bob only to its $C'$ part. Alice is then given an unknown quantum state $\varphi_R$ and the parties perform a protocol $\mathcal{T} \in \locc(RAC:BC')$ which now acts on both systems they share and the input system. Moreover, for the protocol to be catalytic we demand that $\mathcal{T}$ does not modify the catalyst. Notably, we do allow the catalyst to become correlated with $\rho_{AB}$ during the process and in the Appendix we show that these correlations can be made arbitrarily small in trace distance, at the expense of using larger catalysts. The final state of Bob's subsystem at the end of the catalytic teleportation protocol reads:
\begin{align}
    \rho'_{B} = \tr_{RACC'}\left[\mathcal{T}(\varphi_R \ot \rho_{AB} \ot \omega_{CC'})\right]
\end{align}
The quality of the protocol can be quantified similarly as in the case of standard teleportation, i.e. using the fidelity of teleportation (\ref{eq:fid_tel}). Since we have the freedom to choose the catalyst, we define the fidelity of catalytic teleportation $\langle F_{\text{cat}} \rangle_{\rho}$ as
\begin{align}
    \nonumber
    \langle F_{\text{cat}} \rangle_{\rho} = \max_{\mathcal{T}\!,\, \, \omega}& \int \langle \varphi|\tr_{RACC'}\mathcal{T}(\varphi_R \!\ot \!\rho_{AB} \ot \omega_{CC'})|\varphi\rangle \, \text{d} \varphi \! \\
    \text{s.t.}& \quad \tr_{RAB}\mathcal{T}(\varphi_R \ot \rho_{AB} \ot \omega_{CC'}) = \omega_{CC'}, \nonumber\\
    &\quad \mathcal{T} \in \locc(RAC:BC'),  \nonumber \\  
    &\quad \omega_{CC'} \geq 0, \quad \tr \omega_{CC'} = 1. \label{eq:fid_cat_tel}
\end{align}
Let us now define a regularisation of the entanglement fraction from (\ref{ent_fra}), which we will denote by $f_{\text{reg}}(\rho)$, and whose significance will soon become evident, namely
\begin{align}
  \label{eq:reg_ent_frac}
  f_{\text{reg}}(\rho) := & \lim_{n \rightarrow \infty} \frac{f_n(\rho^{\ot n})}{n},
\end{align}
where $f_n(\sigma)$ is the solution to
\begin{align}
    f_{n}(\sigma) :=  \max_{\mathcal{E}}& \,\,\, \sum_{i=1}^n \langle\phi^+ | \tr_{/i}\mathcal{E}(\sigma)|\phi^+\rangle, \nonumber \\
    \text{s.t.}&\,\,\, \mathcal{{E}} \in \locc(A_1 \ldots A_n:B_1\ldots B_n),
    \label{eq:fn_ent_frac}
\end{align}
where $\tr_{/i}(\cdot)$ is the partial trace performed over particles $1\ldots i-1, i+1\ldots n$. Notice that by taking a sub-optimal guess $\mathcal{E} = \mathcal{E}_1 \ot \mathcal{E}_2 \ot \ldots \ot \mathcal{E}_n$ with $\mathcal{E}_1 = \mathcal{E}_2 = \ldots = \mathcal{E}_n$ we can infer that $f_{\text{reg}}(\rho) \geq f(\rho)$ for all quantum states $\rho$. With the above definitions we are now ready to present our main result.
\begin{theorem} The fidelity of catalytic teleportation satisfies
\label{thm1}
\begin{align}
    \label{eq:f_cat}
    \langle F_{\emph{cat}} \rangle_{\rho}  \geq  \frac{f_{\emph{reg}}(\rho) d_R + 1}{d_R + 1}
\end{align}
In other words, there is a protocol $\mathcal{T} \in \locc(RAC:BC')$ and a catalyst $\omega_{CC'} $ that achieves the bound (\ref{eq:f_cat}).
\end{theorem}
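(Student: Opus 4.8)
The plan is to prove the bound constructively: for every $n$ I would exhibit a catalyst $\omega_{CC'}$ and a protocol $\mathcal{T}\in\locc(RAC:BC')$ whose teleportation fidelity equals $\frac{(f_n(\rho^{\ot n})/n)\, d_R+1}{d_R+1}$, and then let $n\to\infty$. Since $\langle F_{\text{cat}}\rangle_\rho$ is a supremum over catalysts and $f_n(\rho^{\ot n})/n\to f_{\text{reg}}(\rho)$, this yields the claimed inequality. The whole construction rests on the relation $\langle F\rangle_\sigma=\frac{f(\sigma)d_R+1}{d_R+1}$ recalled above: if the protocol delivers to Bob a single shared pair whose marginal overlap with $\phi^+$ (of dimension $d_R$) is at least $g$, then ordinary teleportation through that pair achieves fidelity at least $\frac{g\,d_R+1}{d_R+1}$. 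It therefore suffices to produce, catalytically and consuming one copy of $\rho$ per use, one shared pair of fraction $\geq f_n(\rho^{\ot n})/n$.

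Let $\mathcal{E}$ be an optimal $n$-copy protocol in (\ref{eq:fn_ent_frac}). First I would symmetrise it over permutations of the $n$ copies (pre- and post-composing with a shared random permutation $U_\sigma$, which is itself \locc); since $\rho^{\ot n}$ and the objective in (\ref{eq:fn_ent_frac}) are permutation invariant this leaves $f_n$ unchanged while making the $n$ output pairs exchangeable, so each one individually has overlap $f_n/n$ with $\phi^+$. The catalyst is then a \emph{pipeline with a clock}: a classically-correlated clock register holding $\ell$ uniformly in $\{0,\dots,n-1\}$, an input buffer holding $\ell$ fresh copies of $\rho$, and an output buffer holding the $(n-1-\ell)$-pair marginal of $\mathcal{E}(\rho^{\ot n})$ left over from the previous block (all buffers padded to a fixed size with a reference state, so that the dimension is independent of $\ell$).

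The protocol step, conditioned on $\ell$, does the following. If $\ell<n-1$ it swaps the freshly supplied copy $\rho_{AB}$ into input-buffer slot $\ell+1$, releases one pair from the output buffer to teleport $\varphi_R$, and advances the clock to $\ell+1$. If $\ell=n-1$ it swaps $\rho_{AB}$ in to complete $n$ stored copies, applies $\mathcal{E}$, outputs one of the $n$ resulting pairs for teleportation, deposits the remaining $n-1$ into the refilled output buffer, clears the input buffer, and resets the clock to $0$. All operations are classically-controlled local swaps, the \locc\ map $\mathcal{E}$, and one round of ordinary teleportation, so $\mathcal{T}\in\locc(RAC:BC')$. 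Because the clock is uniform and the step permutes its value cyclically while carrying each buffer configuration onto the one prescribed for the next clock value, the marginal on $CC'$ is returned exactly, $\tr_{RAB}\mathcal{T}(\varphi_R\ot\rho_{AB}\ot\omega_{CC'})=\omega_{CC'}$; the pair actually used has marginal overlap $f_n/n$, so the stated relation gives fidelity $\frac{(f_n/n)d_R+1}{d_R+1}$.

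The part requiring the most care is the catalytic (stationarity) constraint rather than the fidelity estimate, which is immediate once a good pair is delivered. Two points must be checked: the bookkeeping must make one length-$n$ cycle consume exactly $n$ copies and release exactly $n$ pairs (the $n$ produced by one run of $\mathcal{E}$), so that the input and output buffers close up consistently and the clock-averaged state is a genuine fixed point; and since the $n$ outputs of $\mathcal{E}$ are in general correlated, I rely on the symmetrisation step to guarantee that each released pair has the correct marginal fraction $f_n/n$, and on the fact that teleporting through one pair is local and leaves the pairs still stored in the catalyst undisturbed. Finally, the catalyst does become correlated with the consumed copy, but only its marginal enters the constraint; suppressing these correlations in trace distance (as claimed in the main text) is a refinement I would relegate to the Appendix.
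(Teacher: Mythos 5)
Your construction is correct and is essentially the paper's own proof in different clothing: your ``pipeline with a clock'' is exactly the paper's catalyst $\omega_{CC'} = \frac{1}{n}\sum_{i}\rho^{\ot i}\ot\sigma^{n-i}\ot\dyad{i}_M$ (fresh copies accumulating, marginals of $\mathcal{E}(\rho^{\ot n})$ draining, a uniform classical register driving the cycle), with the same exact restoration of the catalyst marginal and the same $n\to\infty$ limit. The only cosmetic difference is that you symmetrise $\mathcal{E}$ over permutations so that every released pair has marginal overlap $f_n(\rho^{\ot n})/n$, whereas the paper obtains the same average state $\frac{1}{n}\sum_i \tr_{/i}\mathcal{E}(\rho^{\ot n})$ on the system register directly from the uniform clock mixing which pair is released.
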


\begin{proof}
We will sketch the proof of Theorem \ref{thm1} (see Appendix for formal derivation). We start by constructing the catalyst and a subroutine $\mathcal{T}_{\mathcal{E}}$ that increases entanglement fraction of $\rho_{AB}$. Then we use this preprocessed state to perform standard teleportation $\mathcal{T}'$. The total protocol then reads $\mathcal{T} = \mathcal{T}' \circ \mathcal{T}_{\mathcal{E}}$.

Let $n \geq 2$ be a finite natural number and denote $C := C_2\ldots C_nM$ and $C' := C'_2 \ldots C'_nM$, where $M$ is a classical register. Moreover, let $\mathcal{E} \in \locc(AC:BC')$ be a channel (yet to be determined) and denote $\sigma^{n-i} := \tr_{1\ldots i}\mathcal{E}(\rho^{\ot n})$, where $\tr_{1\ldots i}(\cdot)$ denotes partial trace over the first $i$ copies of $\rho^{\ot n}$. Consider the following catalyst, introduced in \cite{Duan_2005}: 
\begin{align}
    \label{eq:duan_state}
    \omega_{CC'} = \frac{1}{n}\sum_{i=1}^{n} \underbrace{\rho^{\ot i} \ot \sigma^{n-i}}_{C_2C_2' \ldots C_nC_n'} \ot \dyad{i}_M,
\end{align}
\chg{This catalyst is a state of $n-1$ quantum registers, each of dimension $d$, and a classical register of dimension $n$. For a given value $i$ of the classical register, the remaining $n-1$ quantum registers contain $i$ copies of the shared bipartite state $\rho$, and an $n-i$-partite state $\sigma^{n-i}$ that is the marginal of $\mathcal{E}(\rho^{\ot n})$.}

Let us label for clarity $A_1 \equiv A$ and $A_i \equiv C_i$ for $2\leq i\leq n$ and similarly for $B_i$ and $BC_2\ldots C_n$. The joint state of the resource and the catalyst, $\rho_{AB} \ot \omega_{CC'}$, is presented in Fig. \ref{fig:1}a for the exemplary case with $n = 5$. The initial protocol $\mathcal{T}_{\mathcal{E}}$ can be summarised as follows.
\begin{enumerate}
    \item Apply $\mathcal{E} \in \locc(AC:BC')$ to the $n$-th pair using $M$ as the control (see Fig. \ref{fig:1}b).
    \item Relabel $\ket{i}_M \rightarrow \ket{i+1}_M$ for $i < n$ and $\ket{n}_M \rightarrow \ket{1}_M$ (see Fig. \ref{fig:1}c).
    \item Relabel $A_1B_1 \rightarrow A_iB_i$ for all $i$ in $M$ (see Fig. \ref{fig:1}d).
    \item Discard the catalyst $CC'$
\end{enumerate}
As a result the system and the catalyst transform into
\begin{align}
    \nonumber
    \rho_{AB} \rightarrow \rho_{AB}^{(n)} = & \tr_{CC'}\mathcal{T}_{\mathcal{E}}(\rho_{AB} \ot \omega_{CC'}) \\ =& \frac{1}{n} \sum_{i=1}^n \tr_{/ i} \mathcal{E}(\rho^{\ot n}_{AB}),\\
    \omega_{CC'} \rightarrow \omega_{CC'}' =& \tr_{AB} \mathcal{T}_{\mathcal{E}}(\rho_{AB} \ot \omega_{CC'}) = \omega_{CC'}.
    \label{eq:subroutine}
\end{align}

\begin{figure}
    \centering
    \includegraphics[width=\linewidth]{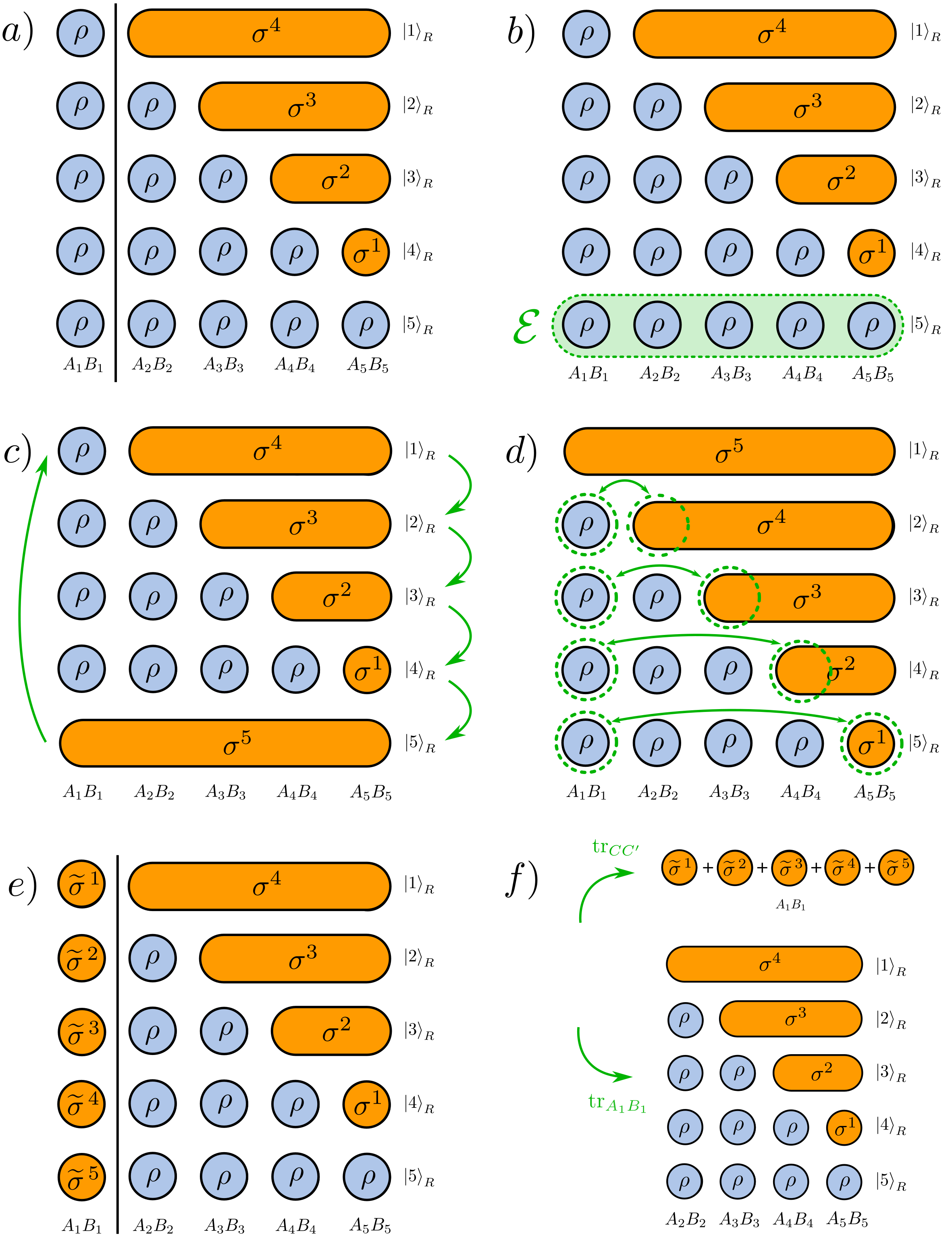}
    \caption{The catalytic subroutine that uses a noisy entangled state as a catalyst to enhance entanglement fraction. Panels $(a)-(e)$ describe subsequent steps of the protocol and $(f)$ the final state of the main system and the catalyst. The catalyst remains unchanged as the system is transformed into a state with a higher entanglement fraction.}
    \label{fig:1}
\end{figure}

Next we apply the standard teleportation scheme for noisy states \cite{Horodecki1998}. Taking $\{U_a^{ A}\}$ for $a \in \{1, \ldots, d_A^2\}$ with $d_A = d_B = d_R$ to be the set of generalised Pauli operators with respect to basis $\{\ket{i}^{ A}\}$ we can summarise $\mathcal{T}'$ as follows. 
\begin{enumerate}
    \item Twirl $\rho_{AB}$ into an isotropic state: 
    \begin{align}
        \rho_{AB}^{(n)} \rightarrow  f(\rho_{AB}^{(n)}) \phi^+_{AB} + [1 - f(\rho_{AB}^{(n)})]\phi_{AB}^{\bot},
    \end{align}
    where $\phi^{\bot} = (\mathbb{1} - \phi^+)/(d_A^2-1)$.
    \item Perform teleportation on $RA \rightarrow B$:
    \begin{enumerate} 
        \item Alice measures $RA$ using a POVM with elements:
        \begin{align}
            M_a^{RA} = (\mathbb{1} \ot U_{a}) \phi^+_{RA} (\mathbb{1} \ot U_{a}^{\dagger}),
        \end{align}
        \item Alice communicates outcome $a$ to Bob,
        \item Bob applies $U_a^{\dagger}(\cdot)U_{a}$ to his share of the state.
    \end{enumerate}
\end{enumerate}
The fidelity of teleportation in this process reads:
\begin{align}
    \frac{f(\rho_{AB}^{(n)}) d_R+1}{d_R+1}.
\end{align}
Notice that so far the channel $\mathcal{E}$ was arbitrary, and so we can now optimize $\mathcal{T} = \mathcal{T}' \circ \mathcal{T}_{\mathcal{E}}$ over all feasible channels $\mathcal{E} \in \locc(AC:BC')$. Taking the limit $n \rightarrow \infty$ and using $\lim_{n \rightarrow \infty} f(\rho_{AB}^{(n)}) = f_{\text{reg}}(\rho_{AB})$ leads to Eq. (\ref{eq:f_cat}). 
\end{proof}

\chg{The regularised entanglement fraction appears difficult to compute in general. However, for large $n$, one can use typicality arguments to find a wide range of states for which the lower-bound in Eq. (\ref{eq:f_cat}) still demonstrates a significant advantage over standard teleportation.}



\textit{Demonstrating catalytic advantage in teleportation.---} Our reasoning so far was valid for arbitrary bipartite density operators $\rho_{AB}$. In this section we will restrict our attention to pure states $\rho_{AB} = \dyad{\psi_{AB}}$ and use typicality arguments to infer that the
presented protocol for catalytic teleportation leads to a generic advantage over the standard teleportation protocol. Interestingly, this is a consequence of an essential property of catalysis: that certain catalysts amplify typical properties of states, even at the level of a single copy. This property of catalysts has been recently employed in \cite{lipkabartosik2020states,shiraishi2020quantum,wilming2020entropy,kondra2021catalytic,takagi2021correlation}. 
\begin{lemma}
\label{lem:1}
The regularised entanglement fraction $f_{\text{reg}}(\psi_{AB})$ for pure states $\psi_{AB}$ satisfies
\begin{align}
    \label{eq:lem1}
    f_{\text{reg}}(\psi_{AB})\geq\,\,\, \max_{\psi'}& \,\,\, f(\psi_{AB}') \\
    \emph{s.t.}& \quad S(\rho_A) \geq S(\rho_A'),
\end{align}
where $\rho_{A} = \tr_B \psi_{AB}$ and $\rho_{A}' = \tr_B \psi_{AB}'$ and $S(\rho) = - \tr \rho \log \rho$ is the Shannon entropy.
\end{lemma}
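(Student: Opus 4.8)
The plan is to lower-bound $f_{\text{reg}}(\psi_{AB})$ by constructing, for each admissible target, an explicit family of feasible maps in the definition (\ref{eq:fn_ent_frac}) of $f_n$ and showing that its normalised value converges to the desired fidelity. Fix any pure state $\psi'_{AB}$ satisfying the constraint $S(\rho'_A) \leq S(\rho_A)$; it then suffices to prove the single-target bound $f_{\text{reg}}(\psi_{AB}) \geq f(\psi'_{AB})$ and afterwards take the maximum over all such $\psi'_{AB}$, since the left-hand side does not depend on $\psi'$.

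The key tool I would invoke is the asymptotic reversibility of pure-state entanglement manipulation: $n$ copies of $\psi_{AB}$ can be transformed by $\locc$ into $m$ copies of $\psi'_{AB}$ with asymptotic fidelity one whenever the rate $m/n$ does not exceed $S(\rho_A)/S(\rho'_A)$. Because $S(\rho_A) \geq S(\rho'_A)$ this rate is at least one, so for every $n$ there exists $\Lambda_n \in \locc(A_1\ldots A_n : B_1\ldots B_n)$ with $F\big(\Lambda_n(\psi_{AB}^{\otimes n}),\, (\psi'_{AB})^{\otimes n}\big) = 1 - \epsilon_n$ and $\epsilon_n \to 0$; concretely, one concentrates $\psi_{AB}^{\otimes n}$ into roughly $nS(\rho_A)$ shared maximally entangled pairs and then dilutes these into at least $n$ copies of $\psi'_{AB}$, retaining $n$ of them. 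I would then compose $\Lambda_n$ with the copy-wise map $\mathcal{E}'^{\otimes n}$, where $\mathcal{E}'$ is the single-copy $\locc$ channel attaining $f(\psi'_{AB}) = \langle \phi^+ | \mathcal{E}'(\psi'_{AB}) | \phi^+\rangle$. The composite $\mathcal{E} = \mathcal{E}'^{\otimes n} \circ \Lambda_n$ is a valid candidate for $f_n$.

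To control the objective I would use continuity. Since $\mathcal{E}'^{\otimes n}$ is $\locc$ and hence a contraction in trace distance, the output $\mathcal{E}(\psi_{AB}^{\otimes n})$ lies within trace distance $\delta_n = O(\sqrt{\epsilon_n})$ of $(\mathcal{E}'(\psi'_{AB}))^{\otimes n}$, whose every single-copy marginal equals $\mathcal{E}'(\psi'_{AB})$ and thus has overlap exactly $f(\psi'_{AB})$ with $\phi^+$. As partial trace is also a contraction, each of the $n$ marginals $\tr_{/i}\mathcal{E}(\psi_{AB}^{\otimes n})$ differs from $\mathcal{E}'(\psi'_{AB})$ by at most $\delta_n$, so each overlap $\langle\phi^+|\tr_{/i}\mathcal{E}(\psi_{AB}^{\otimes n})|\phi^+\rangle$ is within $\delta_n$ of $f(\psi'_{AB})$. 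Summing the $n$ terms and using that $f_n$ is a maximum gives $f_n(\psi_{AB}^{\otimes n}) \geq n\,(f(\psi'_{AB}) - \delta_n)$; dividing by $n$ and letting $n \to \infty$ yields $f_{\text{reg}}(\psi_{AB}) \geq f(\psi'_{AB})$, and maximising over feasible $\psi'_{AB}$ completes the argument.

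The main obstacle is the error bookkeeping in this last step: although the $n$ overlaps each carry an error $\delta_n$, these errors accumulate additively, so the total deviation $n\delta_n$ would be fatal were it not divided by $n$ in the regularisation; the whole argument therefore hinges on $\delta_n \to 0$, which is exactly what asymptotic unit fidelity supplies. A secondary point I would check carefully is that the concentrate-then-dilute construction produces a state close to $(\psi'_{AB})^{\otimes n}$ in \emph{global} trace distance rather than merely per copy, so that a single contraction bound applies uniformly across all $n$ marginals, and that the local dimensions of $\psi'_{AB}$ and of $\mathcal{E}'(\psi'_{AB})$ are matched to the $\phi^+$ appearing in (\ref{eq:fn_ent_frac}).
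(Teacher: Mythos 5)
Your proposal is correct and follows essentially the same route as the paper's own proof (Appendix C): both reduce the claim to the asymptotic, rate-one $\locc$ convertibility of $\psi_{AB}^{\otimes n}$ into a state trace-norm-close to $(\psi'_{AB})^{\otimes n}$ under the condition $S(\rho_A)\geq S(\rho'_A)$ (the paper cites this as a known lemma, while you re-derive it via concentration followed by dilution), then use that map as a feasible candidate in $f_n$, bound each of the $n$ marginal overlaps via contraction of the trace distance, and divide by $n$. The only cosmetic difference is that you explicitly compose with the single-copy optimal channel $\mathcal{E}'^{\otimes n}$ attaining $f(\psi'_{AB})$, whereas the paper identifies $f(\psi'_{AB})$ directly with $\langle\phi^+|\psi'_{AB}|\phi^+\rangle$.
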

We now apply Lemma $1$ to show that catalytic teleportation outperforms standard teleportation for a wide range of generic quantum states.

\textit{{Example.---}} As a simple example let us consider teleporting a three-dimensional quantum system ($d_{R} = 3$) using a singlet. In this case the state shared between Alice and Bob can be written as $\psi_{AB} = \sum_{i=1}^{3} \sqrt{\lambda_i} \ket{i}_A\ket{i}_B$, with Schmidt coefficients $\lambda_1 = 1/2$, $\lambda_2 = 1/2$ and $\lambda_3 = 0$. Its entanglement fraction is equal to $f(\psi_{AB}) = (\sum_{i=1}^{3} \sqrt{\lambda_i} )^2/3 = 2/3$ and therefore its fidelity of teleportation reads
\begin{align}
    \langle F \rangle_{\psi} = 0.75,
\end{align}
which is also larger than the classical threshold $\langle F_c \rangle = 1/2$. 

Let us now analyse the protocol for catalytic teleportation. In this case the relevant benchmark is the fidelity of catalytic teleportation (\ref{eq:fid_cat_tel}) whose lower bound can be found using Lemma \ref{lem:1}. To compute it, we choose the optimizer in (\ref{eq:lem1}) to be the state $\psi_{AB}'$ with Schmidt coefficients $\lambda_1' = x$ and $\lambda_2' = \lambda_3' = (1-x)/2$, where $x$ is the unique solution to $h(x) = x \log 2$ (which is $x \approx 0.77$) and $h(x) = - x \log x - (1-x)\log(1-x)$. This is a feasible choice since the entropies of marginals of $\psi_{AB}$ and $\psi_{AB}'$ are both equal to $\log 2$. According to Lemma \ref{lem:1}, the regularised entanglement fraction can be lower-bounded by the entanglement fidelity of $\psi_{AB}'$, therefore $f_{\text{reg}} \geq f(\psi_{AB}') \approx 4/5$. Using Theorem \ref{thm1} we can then infer that 
\begin{align}
    \langle F_{\text{cat}} \rangle \geq 0.85,
\end{align}
which is roughly $13\%$ larger than the best fidelity that could ever be obtained when using $\psi_{AB}$ alone. Interestingly, this simple example is not a singualar case: there are in fact many entangled states whose performance in teleportation can be catalytically enhanced. To show this in Fig. \ref{fig:2} we used Lemma \ref{lem:1} and numerically computed the lower-bound on the catalytic advantage $\eta(\psi) := (\langle F_{\text{cat}} \rangle - \langle F \rangle) / \langle F \rangle$. \chgR{In the Appendix we further show that a similar advantage is present when using a small catalyst (qutrit). In that case the enhancement is around 2.5\% of what can be achieved using $\psi_{AB}$ only.}

\begin{figure}
    \centering
    \includegraphics[width=\linewidth]{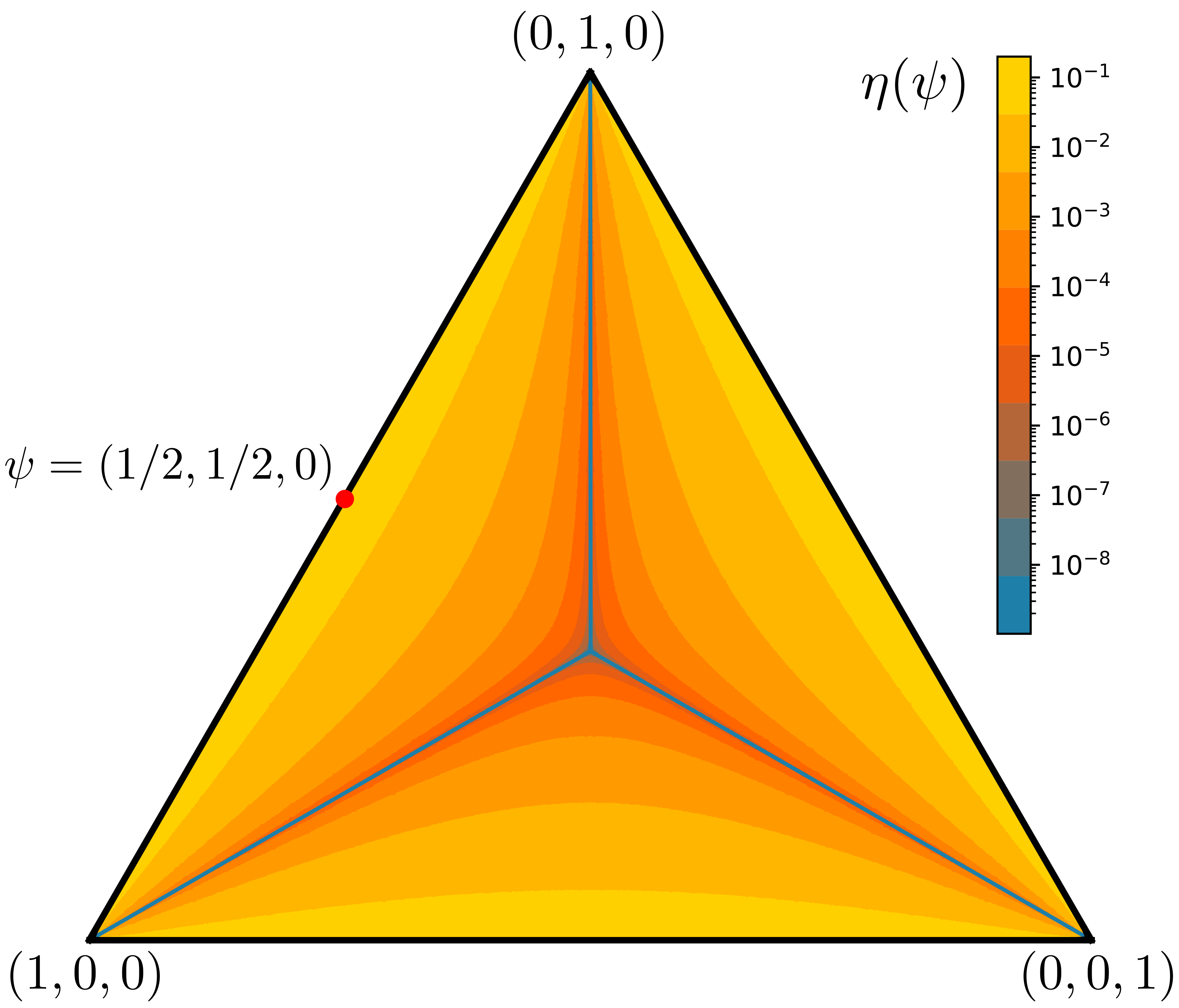}
    \caption{The catalytic advantage $\eta(\psi)$ in quantum teleportation. The triangle describes the space of all pure states of two qutrits. In particular, each point $\bm{\lambda} = (\lambda_1, \lambda_2, \lambda_3)$  corresponds to a unique (up to local unitaries) state with Schmidt coefficients $\{\lambda_i\}$ for $1 \leq i \leq 3$. The red point corresponds to the example from the main text.}
    \label{fig:2}
\end{figure}

\chg{
{\textit{Beyond quantum teleportation.---}}} The catalytic subroutine $\mathcal{T}_{\mathcal{E}}$ we used to prove Theorem \ref{thm1} can be used to address more general problems, beyond increasing the entanglement fraction, in various paradigms -- other than LOCC. Let us mention the general idea, postponing the details and an explicit application to the Appendix. 

Let us for simplicity focus on the case of a single party $S$ and let $\mathcal{E}$ be any channel from a class of channels $\mathscr{C} \subseteq \text{CPTP}$ acting on $S$. Moreover, let $O$ be an arbitrary observable on $S$. Our goal is to minimize (or maximize) the expectation of $O$ in the state $\rho$ under the available class of operations $\mathscr{C}$. Define
\begin{align}
    \label{eq:gen_erg}
    \mathcal{R}(\rho) :=  \min_{\mathcal{E} \in \mathscr{C}}\,\, \tr\left[\mathcal{E}(\rho) O\right].
\end{align}
Let us also define an analogous quantity for when many copies of $\rho$ are processed collectively, i.e.
\begin{align}
    \label{eq:gen_erg_col}
    \mathcal{R}_{\text{col}}(\rho) := \min_{\mathcal{E} \in \mathscr{C}}\,\, \frac{1}{n} \tr[\mathcal{E}(\rho^{\otimes n}) O^{\otimes n}],
\end{align}
where now $\mathcal{E}$ is a collective operation that acts on $n$ copies of $\rho$, and $O^{\otimes n} \equiv \sum_{i=1}^n \mathbb{1}_{/i} \otimes O_{i}$. Very often $\mathcal{R}_{\text{col}}(\rho) < \mathcal{R}(\rho)$, i.e. processing multiple copies collectively is strictly better than processing them one by one. Interestingly, the same improvement in manipulation abilities can be achieved when using only a single copy of $\rho$ and a suitably chosen catalyst. This is the content of our next theorem. 
\begin{theorem}
\label{theorem2}
Let $\rho$ be a quantum state and $\mathcal{D} \in \mathscr{C}$. Then there is a quantum state $\omega$ such that
\begin{align}
    \min_{\mathcal{D} \in \mathscr{C}} \tr\left[\mathcal{D}(\rho \ot \omega) (O \ot \mathbb{1})\right] = \mathcal{R}_{\emph{col}}(\rho),
\end{align}
and moreover
\begin{align}
    \tr_{1}\left[ \mathcal{D}(\rho \ot \omega)\right] = \omega.
\end{align}
\end{theorem}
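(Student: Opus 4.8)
The plan is to adapt the catalytic subroutine $\mathcal{T}_{\mathcal{E}}$ constructed in the proof of Theorem~\ref{thm1} to the abstract setting of an arbitrary class of operations $\mathscr{C}$ and observable $O$. The key structural observation is that the catalyst from Eq.~(\ref{eq:duan_state}) and the relabelling steps used there did not depend in any essential way on the \locc{} structure or on entanglement fraction being the target quantity; they only used the fact that the registers could be cyclically permuted by a classical control and that the local transformation $\mathcal{E}$ could be applied conditionally. I would therefore take an optimal collective channel $\mathcal{E}^\star \in \mathscr{C}$ achieving $\mathcal{R}_{\text{col}}(\rho)$ on $\rho^{\otimes n}$ (for large but finite $n$), and build the same rotating catalyst, now adapted to the single-party setting.

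First I would fix $n$ and let $\mathcal{E}^\star \in \mathscr{C}$ be the collective operation that (approximately) attains $\mathcal{R}_{\text{col}}(\rho)$. Denote $\sigma^{n-i} := \tr_{1\ldots i}\mathcal{E}^\star(\rho^{\otimes n})$ and define the catalyst in direct analogy with Eq.~(\ref{eq:duan_state}),
\begin{align}
    \omega = \frac{1}{n}\sum_{i=1}^{n} \rho^{\otimes i} \otimes \sigma^{n-i} \otimes \dyad{i}_M,
\end{align}
where $M$ is a classical control register of dimension $n$. Next I would define $\mathcal{D}$ on the joint system $\rho \otimes \omega$ by the same three moves as in Theorem~\ref{thm1}: (i) apply $\mathcal{E}^\star$ to the block indexed by the current value of $M$, controlled on $M$; (ii) cyclically increment $M$, sending $\ket{n}_M \to \ket{1}_M$; (iii) relabel the ``fresh'' copy $\rho$ into the slot just vacated. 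This $\mathcal{D}$ must be shown to lie in $\mathscr{C}$ — which I expect to hold for any reasonable class closed under tensoring with ancillas, classically controlled application, and permutations — and it must leave $\omega$ invariant, exactly the catalyticity computation of Eq.~(\ref{eq:subroutine}), giving $\tr_1[\mathcal{D}(\rho\otimes\omega)] = \omega$.

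The heart of the argument is then to evaluate $\tr[\mathcal{D}(\rho\otimes\omega)(O\otimes\mathbb{1})]$. Because the cyclic rotation averages over all $n$ positions, the reduced state on the output slot is $\frac{1}{n}\sum_{i=1}^n \tr_{/i}\mathcal{E}^\star(\rho^{\otimes n})$, so its overlap with $O$ is precisely $\frac{1}{n}\tr[\mathcal{E}^\star(\rho^{\otimes n})\,O^{\otimes n}]$ with $O^{\otimes n} = \sum_i \mathbb{1}_{/i}\otimes O_i$; this equals $\mathcal{R}_{\text{col}}(\rho)$ by the choice of $\mathcal{E}^\star$. Minimising over $\mathcal{D} \in \mathscr{C}$ on the left can only decrease this value, and a matching lower bound follows because any catalytic single-copy strategy induces a feasible collective strategy on the $n$ copies stored in the catalyst, so neither side can beat $\mathcal{R}_{\text{col}}(\rho)$. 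I expect the main obstacle to be purely the closure/membership step: verifying that the conditionally-applied, permuted channel $\mathcal{D}$ genuinely belongs to $\mathscr{C}$, since unlike \locc{} a generic class need not be closed under classical control or relabelling, and the invariance $\tr_1[\mathcal{D}(\rho\otimes\omega)]=\omega$ is only exact in the idealised relabelling picture — tracking the small correlations between $\rho$ and $\omega$ (as flagged in the Appendix remark for Theorem~\ref{thm1}) may require taking $n\to\infty$ and controlling the error in trace distance.
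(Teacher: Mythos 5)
Your proposal matches the paper's proof essentially step for step: the same rotating catalyst $\omega = \frac{1}{n}\sum_{i=1}^{n}\rho^{\otimes i}\otimes\sigma^{n-i}\otimes\dyad{i}_M$, the same three-move channel $\mathcal{D}$ (conditioned application of $\mathcal{E}$, cyclic shift of $M$, cyclic relabelling of the registers), the same catalyticity check, and the same evaluation $\tr\left[\mathcal{D}(\rho\otimes\omega)(O\otimes\mathbb{1})\right] = \frac{1}{n}\tr\left[\mathcal{E}(\rho^{\otimes n})O^{\otimes n}\right]$ followed by optimisation over $\mathcal{E}\in\mathscr{C}$. The membership issue you flag (whether the conditioned, permuted channel actually lies in $\mathscr{C}$) is precisely the point the paper dismisses with ``it can be easily verified'', so your caveat is well placed but does not constitute a different route.
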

Notice that by taking $S = AB$, $\mathscr{C} = \locc(A:B)$ and $O = \phi^+_{AB}$ we obtain the catalytic subroutine from the previous section. Interestingly, the reasoning presented above is much more general, and to demonstrate this in the Appendix we apply Theorem \ref{theorem2} to the problem of work extraction in quantum thermodynamics. As a consequence, it can be shown that catalysis unlocks the energy contained in a passive state under arbitrary classes of operations, therefore generalising the main result from \cite{Sparaciari_2017}.

\textbf{\textit{Discussion.---}} We have introduced an extension of the standard teleportation protocol, to the case when Alice and Bob use entangled states in a catalytic way. We showed that when arbitrary catalysts are allowed, the teleportation fidelity can be lower-bounded by a regularisation of the standard teleportation fidelity. We then showed that this regularised quantifier is strictly larger than the standard teleportation fidelity, therefore demonstrating a genuine catalytic advantage for a wide range of quantum states.

Quantum teleportation is one of many information-theoretic protocols whose performance depends directly on the entanglement fraction of the used resource. Our new methods (in particular the catalytic subroutine) can be therefore directly applied to study other protocols whose performance is quantified using entanglement fraction (see e.g. \cite{Ekert1991,Zukowski1993,Bennett1996,Bennett1992}). 

\chg{The generalised version of our catalytic subroutine, in a certain sense, allows collective effects to be incorporated at a single-copy level, using appropriately chosen catalysts. Since quantum advantages generally result from the ability of processing many quantum states simultaneously, we hope that the methods described here will lead to interesting extensions of quantum protocols that enjoy the performance of collective processing, but using only few copies of the resource.}

\chg{Finally, we believe that catalysis can lead to interesting extensions of standard quantum resource theories. Since entanglement fraction can be viewed as one of the Renyi entropies, it is plausible to expect that correlated catalysis can be used to selectively increase other Renyi entropies. This can potentially lead to better performances in various operational tasks, ranging from standard discrimination \cite{Napoli2016,Piani2016,Takagi2019,Piani2015,DS} up to more exotic variants thereof \cite{ducuara2021quantum}.}

\textit{\textbf{Acknowledgments}.---}
We thank Tulja Varun Kondra, Chandan Datta and Alex Streltsov for insightful comments and discussions on the first draft of this manuscript. We are especially grateful for pointing out a mistake in our proof of the statement that the system-catalyst correlations vanish in the limit of large catalysts (Appendix B), as well as suggesting a fruitful way to amend this problem using trace distance (see also \cite{kondra2021catalytic} for an independent proof of an analogous statement).

PLB acknowledges support from the UK EPSRC (grant no. EP/R00644X/1). PS acknowledges support from a Royal Society URF (UHQT).



\bibliographystyle{apsrev4-2}
\bibliography{citations}

\appendix
\onecolumngrid
\section{Proof of Theorem 1}
Consider two spatially separated parties $A=A_1\ldots A_n$ and $B=B_1 \ldots B_n$ composed of $n$ identical and independently distributed (i.i.d) entangled states. Let $\rho_{A_1B_1}$ be a single-copy state of the system shared by Alice and Bob and denote $A = A_1 A_2 \ldots A_n$ and $B = B_1 B_2 \ldots B_n$. We will also use the shorthand ${1\!:\!i}$ to denote a subset of subsystems $A_1 B_1\ldots A_iB_i$, with a convention that $1\!:\!0$ is an empty subset. Let us consider an arbitrary map $\mathcal{E} \in \locc(A:B)$ and denote:
\begin{align}
    \mathcal{E}(\rho^{\ot n}_{AB}) = \sigma_{AB}^n \qquad \text{and} \qquad \sigma^{n-i} := \tr_{{1:n-m}}  \left(\sigma_{AB}^n\right).
\end{align}
Consider the following state of the catalyst:
\begin{align}
    \omega_{CC'} := \sum_{i = 1}^{n} \frac{1}{n} \underbrace{\rho^{\ot i-1} \ot \sigma^{n-i}}_{A_2B_2 \ldots A_nB_n} \ot \dyad{i}_{M}.
\end{align}
In what follows we will think about the system in register $A_1B_1$ as the main state shared between Alice and Bob, and $CC'$ with $C = A_2 \ldots A_n M$ and $C' = B_2 \ldots B_n M$ as the catalyst they share, with $M$ being a classical register. The joint initial state shared between Alice and Bob is of the form:
\begin{align}
    \rho_{A_1B_1} \ot \omega_{\rm CC'} &= \frac{1}{n}\left(\rho_{A_1B_1}\ \ot \sigma^{n-1}_{2:n} \ot \dyad{1}_{M} + \ldots + \rho^{\ot n-1}_{1:n-1} \ot \sigma^{1}_{A_nB_n} \ot \dyad{n-1}_{M} + \rho^{\ot n}_{1:n} \ot \dyad{n}_{M}\right) \\
    &= \sum_{i = 1}^{n} \frac{1}{n} \underbrace{\rho^{\ot i} \ot \sigma^{n-i}}_{A_1B_1 \ldots A_nB_n} \ot \dyad{i}_{M}
\end{align}
The protocol $\mathcal{T}_{\mathcal{E}}$ that improves the fully entangled fraction of a state $\rho_{A_1 B_1}$ can be summarised as follows:
\begin{enumerate}
    \item Alice and Bob apply $\mathcal{E} \in \locc(A:B)$ conditioned on the value in the classical register $M$. More specifically, the map is of the form: 
    \begin{align}
        \id_{AB}(\cdot) \ot \sum_{i=1}^{n-1} \langle i |\cdot |i \rangle_M + \mathcal{E}_{AB}(\cdot) \ot \langle n | \cdot | n \rangle_M.
    \end{align}
    \item Alice and Bob relabel their shared classical register $M$ in the following way:
    \begin{align}
        \dyad{i}_M &\rightarrow \dyad{i+1}_M \qquad \text{for} \quad i < n, \\
        \dyad{n}_M &\rightarrow \dyad{1}_M.
    \end{align}
    \item Alice and Bob relabel their quantum systems conditioned on the classical register in the following way:
    \begin{align}
        \rho_{A_1B_1 A_2 B_2 \ldots A_iB_i A_{i+1}B_{i+1} \ldots A_n B_n} \ot \dyad{i}_M \rightarrow  \rho_{A_{i+1}B_{i+1} A_2 B_2 \ldots A_iB_i A_{1}B_{1} \ldots A_n B_n} \ot \dyad{i}_M \qquad \text{for} \quad 0 \leq i \leq n
    \end{align}
\end{enumerate}
The state shared between Alice and Bob during each of the steps of the protocol $\mathcal{T}_{\mathcal{E}}$ can be written as:
\begin{align}
    \rho_{A_1B_1} \ot \omega_{\rm CC'} \xrightarrow[]{\quad 1\quad }& \,\,
    \sum_{i=1}^{n-1} \frac{1}{n} \rho^{\ot i}_{1:i} \ot \sigma^{n-i}_{i+1:n} \ot \dyad{i}_M + \frac{1}{n}  \mathcal{E}(\rho^{\ot n}_{1:n}) \ot \dyad{n}_{M} \\
    =&\,\, \frac{1}{n}\left(\rho \ot \sigma^{n-1} \ot \dyad{1}_{M} + \ldots + \sigma^{n} \ot \dyad{n}_{M}\right) \\
    \xrightarrow[]{\quad 2\quad }& \,\,  \sum_{i=0}^{n-1} \frac{1}{n} \rho^{\ot i}_{1:i} \ot \sigma^{n-i}_{i+1:n} \ot \dyad{i+1}_M
    \\
    =&\,\, \frac{1}{n}\left(\sigma^{n} \ot \dyad{1}_{M} + \ldots + \rho^{\ot (n-1)} \ot \sigma^{1} \ot \dyad{n}_{M}  \right)  \\
    \xrightarrow[]{\quad 3\quad }& \,\,  \frac{1}{n} \sigma^n_{1:n} \ot \dyad{1}_M + \frac{1}{n} \sum_{i=2}^{n} (\rho^{\ot i-1} \ot \sigma^{n-i})_{(1:n)_i} \ot \dyad{i}_M, \\
\end{align}
where for clarity we denoted $(1\!:\!n)_i := A_{i+1}B_{i+1}A_2B_2 \ldots A_iB_i A_1B_1 \ldots A_nB_n$.  Noting that $\Tr_{A_1B_1}(\sigma^n) = \sigma^{n-1}$, the reduced state of the catalyst ($CC' = A_2B_2\ldots A_{n}B_{n}M$) after this protocol reads:
\begin{align}
    \tr_{A_{1}B_{1}}\mathcal{T}_{\mathcal{E}}\left(\rho_{A_1B_1}\ot \omega_{CC'}\right) &= \sum_{i=1}^{n} \frac{1}{n} \rho^{\ot i-1}_{2:i} \ot \sigma^{n-i}_{i+1:n}\ot \dyad{i}_M = \omega_{CC'}.
\end{align}
After labelling the single-particle state $\widetilde{\sigma}^{\, i} := \tr_{/i}(\sigma^n)$, the reduced state of the main system ($A_1 B_1$) can be written as:
\begin{align}
    \label{eq:final_state}
    \widetilde{\sigma}_{A_1B_1}' = \tr_{CC'} \mathcal{T}_{\mathcal{E}}\left(\rho_{A_1B_1}\ot \omega_{CC'}\right) &= \frac{1}{n}\left(\widetilde{\sigma}^{\, 0} + \widetilde{\sigma}^{\, 1} + \ldots + \widetilde{\sigma}^{\, n}\right) = \frac{1}{n} \sum_{i=1}^n \tr_{/i} \mathcal{E}(\rho^{\ot n}).
\end{align}
In particular, we see that the main system ends up in a certain averaged state, while the reduced state of the catalyst remains unchanged. The fully entangled fraction of the main system therefore becomes:
\begin{align}
    \label{app:eq_fid_rhop}
    f(\rho'_{A_1B_1}) &= \sum_{i=1}^n \langle\phi^+ | \tr_{/i}\mathcal{E}(\rho^{}_{AB})|\phi^+\rangle.
\end{align}

Finally, we use the state state $\rho'_{A_1B_1}$ processed via $\mathcal{T}_{\mathcal{E}}$ as an input to a standard noisy teleport scheme $\mathcal{T}_2$, defined in the main text. Optimising the protocol over all maps $\mathcal{E} \in \locc(A:B)$ leads to the average fidelity of teleportation:
\begin{align}
    \langle F \rangle_{\rho'} = \max_{\mathcal{E} \in \locc(A:B)}\frac{f(\rho'_{A_1B_1})d_R + 1}{d_R+1} =   \frac{n^{-1} f_n(\rho_{AB}^{\ot n})d_R + 1}{d_R+1}.
\end{align}
Taking the limit $n \rightarrow \infty$ in the above expression completes the proof.

\section{\chg{Correlations between the system and the catalyst in the catalytic teleportation protocol}}
Consider the state of the main system $(A_1B_1)$ and the catalyst $CC' \equiv A_2B_2 \ldots A_nB_nM$. After the action of the first part of the catalytic teleportation protocol, $\mathcal{T}_{\mathcal{E}}$, the joint state becomes:
\begin{align}
    \mathcal{T}_{\mathcal{E}}(\rho_{A_1B_1} \ot \omega_{CC'}) = \sum_{i=0}^{n-1} \frac{1}{n} ({\rho^{\ot i} \ot \sigma^{n-i}})_{(1:n)_i} \ot \dyad{i+1}_M.
\end{align}
In order to show that the correlations between the system and the catalyst can be made as small as possible we now make an additional assumption. In particular, we further assume that the collective transformation ${\mathcal{E}}$ produces, approximately, a tensor-product state. This leads to the following lemma:
\begin{lemma}
Let $\mathcal{T}_{\mathcal{E}}$ be the protocol defined in Appendix $A$. If:
\begin{align}
    \label{eq:lemm_assump}
    \norm{\mathcal{E}(\rho_{AB}^{\ot n}) - \sigma^{\ot n}_{AB}}_1 \leq \epsilon,
\end{align}
then
\begin{align}
    \norm{\mathcal{T}_{\mathcal{E}}(\rho_{A_1B_1} \ot \omega_{CC'}) - \widetilde{\sigma}_{A_1B_1}' \ot \omega_{CC'}}_1 \leq \mathcal{O}(\epsilon),
\end{align}
where $\widetilde{\sigma}_{A_1B_1}'$ is defined in Eq. (\ref{eq:final_state}).
\end{lemma}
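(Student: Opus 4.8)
The plan is to exploit three elementary facts about the trace norm: its contractivity under any CPTP map (in particular under partial trace), its multiplicativity across tensor factors, and the fact that for states block-diagonal in an orthonormal classical register $M$ the trace distance collapses to the classically-averaged distance of the blocks. The assumption (\ref{eq:lemm_assump}) says that the output of the collective map is $\epsilon$-close to the full product $\sigma_{AB}^{\ot n}$; the whole argument is then about propagating this single inequality through the branches of $\mathcal{T}_{\mathcal{E}}(\rho_{A_1B_1}\ot\omega_{CC'})$ and observing that, in the product-approximated state, the main/catalyst bipartition factorises branch by branch.

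First I would record the consequences of (\ref{eq:lemm_assump}) for the marginals. Since $\sigma^{n-i}=\tr_{1:i}\mathcal{E}(\rho_{AB}^{\ot n})$ and $\sigma_{AB}^{\ot(n-i)}=\tr_{1:i}\sigma_{AB}^{\ot n}$, contractivity gives $\norm{\sigma^{n-i}-\sigma_{AB}^{\ot(n-i)}}_1\le\epsilon$ for every $i$, and likewise $\norm{\widetilde{\sigma}^{\,i}-\sigma_{AB}}_1\le\epsilon$ for the single-copy marginals $\widetilde{\sigma}^{\,i}=\tr_{/i}\mathcal{E}(\rho_{AB}^{\ot n})$. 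Next I would bound each branch of the joint state. The branch carrying $\dyad{i+1}_M$ holds $\rho^{\ot i}\ot\sigma^{n-i}$; using multiplicativity, $\norm{\rho^{\ot i}\ot\sigma^{n-i}-\rho^{\ot i}\ot\sigma_{AB}^{\ot(n-i)}}_1=\norm{\sigma^{n-i}-\sigma_{AB}^{\ot(n-i)}}_1\le\epsilon$. In the fully factorised state $\rho^{\ot i}\ot\sigma_{AB}^{\ot(n-i)}$ the relabelled main register $A_1B_1$ (logical position $i+1$) holds exactly one copy of $\sigma_{AB}$, so this branch factorises across the main/catalyst cut as $\sigma_{AB}\ot\big(\rho^{\ot i}\ot\sigma_{AB}^{\ot(n-i-1)}\big)$.

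I would then assemble the branches. Because $M$ is a classical register, the trace distance between $\mathcal{T}_{\mathcal{E}}(\rho_{A_1B_1}\ot\omega_{CC'})$ and the state $\sigma_{AB}\ot\omega^{\mathrm{prod}}$, with $\omega^{\mathrm{prod}}:=\frac1n\sum_{i=0}^{n-1}\big(\rho^{\ot i}\ot\sigma_{AB}^{\ot(n-i-1)}\big)\ot\dyad{i+1}_M$, equals the weighted average of the per-branch distances and is therefore $\le\epsilon$; crucially the factor $\sigma_{AB}$ is the same in every branch, which is what makes the assembled state a genuine product. A second application of contractivity, replacing each $\sigma_{AB}^{\ot(n-i-1)}$ inside $\omega^{\mathrm{prod}}$ by $\sigma^{n-i-1}$, gives $\norm{\omega^{\mathrm{prod}}-\omega_{CC'}}_1\le\epsilon$, and the marginal estimate above gives $\norm{\widetilde{\sigma}'_{A_1B_1}-\sigma_{AB}}_1\le\epsilon$. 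Three triangle-inequality steps then yield $\norm{\mathcal{T}_{\mathcal{E}}(\rho_{A_1B_1}\ot\omega_{CC'})-\widetilde{\sigma}'_{A_1B_1}\ot\omega_{CC'}}_1\le 3\epsilon=\mathcal{O}(\epsilon)$.

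The step I expect to be delicate is the per-branch factorisation, i.e. verifying that after the conditional relabelling of step $3$ the main register always carries precisely a single $\sigma_{AB}$ factor, independently of the branch index $i$. This is exactly the point that fails without assumption (\ref{eq:lemm_assump}): if $\mathcal{E}(\rho_{AB}^{\ot n})$ is genuinely correlated, the main system and catalyst stay correlated through $M$ and cannot be decoupled. The other point requiring care is the bookkeeping that keeps the final constant $\mathcal{O}(\epsilon)$ rather than $\mathcal{O}(n\epsilon)$: each branch contributes an error at most $\epsilon$, but the classical average $\tfrac1n\sum_i$ over the $n$ branches prevents these from accumulating, so the $n$-dependence cancels.
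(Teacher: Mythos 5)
Your proposal is correct and follows essentially the same route as the paper's proof: both exploit the block-diagonal structure over the classical register $M$ to reduce to per-branch distances, use contractivity of the trace norm under partial trace to propagate the assumption $\norm{\mathcal{E}(\rho_{AB}^{\ot n})-\sigma_{AB}^{\ot n}}_1\le\epsilon$ to the marginals $\sigma^{n-i}$ and $\widetilde{\sigma}^{\,i}$, and pivot through the i.i.d.\ product state via triangle inequalities to land on the same $3\epsilon$ bound. The only differences are presentational: the paper implements the conditional relabelling through explicit swap unitaries and unitary invariance of the trace norm, and applies the triangle inequality branch by branch, whereas you assemble global pivot states $\sigma_{AB}\ot\omega^{\mathrm{prod}}$ and $\sigma_{AB}\ot\omega_{CC'}$ and apply it once globally -- the two organisations are equivalent.
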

\begin{proof}
Let $\mathcal{U}_i$ be a unitary map that swaps subsystem $A_1B_1$ with $A_iB_i$, in particular
\begin{align}
    \mathcal{U}_i(\rho^{\ot i-1} \ot \sigma^{n-i+1}) = (\rho^{\ot i-1} \ot \sigma^{n-i+1})_{(1:n)_i}. 
\end{align}
This allows us to write
\begin{align}
    \norm{\mathcal{T}_{\mathcal{E}}(\rho_{A_1B_1} \ot \omega_{CC'}) - \widetilde{\sigma}_{A_1B_1}' \ot \omega_{CC'}}_1 &= \frac{1}{n} \norm{\sum_{i=1}^n  \mathcal{U}_i(\rho^{\ot i-1} \ot \sigma^{n-i+1}) \ot \dyad{i} - \sum_{i=1}^n \widetilde{\sigma} \ot \rho^{\ot i-1} \ot \sigma^{n-i}\ot \dyad{i}}_1\\
    &\leq \frac{1}{n} \sum_{i=1}^n \norm{\mathcal{U}_i(\rho^{\ot i-1} \ot \sigma^{n-i+1}) - \widetilde{\sigma} \ot \rho^{\ot i-1} \ot \sigma^{n-i}}_1 \\
    &= \frac{1}{n} \sum_{i=1}^n \norm{\rho^{\ot i-1} \ot \sigma^{n-i+1} - \mathcal{U}_i^{\dagger}({\widetilde{\sigma} \ot \rho^{\ot i-1} \ot \sigma^{n-i}})}_1\\
    &= \frac{1}{n} \sum_{i=1}^n \norm{\rho^{\ot i-1} \ot \sigma^{n-i+1} - \rho^{\ot i-1} \ot \widetilde{\sigma} \ot \sigma^{n-i}}_1 \\
    &= \frac{1}{n} \sum_{i=1}^n \norm{\sigma^{n-i+1} - \widetilde{\sigma} \ot \sigma^{n-i}}_1,
\end{align}
where in the third line we have used the fact that trace distance is invariant under unitary transformations. Now, using the assumption from Eq. (\ref{eq:lemm_assump}) and the fact that $\sigma^n := \mathcal{E}(\rho^{\ot n})$, we can further write:
\begin{align}
    \norm{\sigma^{n-i+1} - \widetilde{\sigma} \ot \sigma^{n-i}}_1 &= \norm{\sigma^{n-i+1} - \sigma^{\ot n-i+1} + \sigma^{\ot n-i+1} - \widetilde{\sigma} \ot \sigma^{n-i}}_1 \\
    &\leq \norm{\sigma^{n-i+1} - \sigma^{\ot n-i+1}}_1 + \norm{\sigma \ot \sigma^{\ot n-i} - \widetilde{\sigma} \ot \sigma^{n-i}}_1 \\
    &= \norm{\sigma^{n-i+1} - \sigma^{\ot n-i+1}}_1 + \norm{\sigma - \widetilde{\sigma}}_1 + \norm{\sigma^{\ot n-i} - \sigma^{n-i}}_1 \\
    &\leq 2 \norm{\sigma^n - \sigma^{\ot n}} + \norm{\sigma - \widetilde{\sigma}}_1,
\end{align}
where in the second and third lines we used triangle inequality and the fact that trace distance is contractive under CPTP maps (partial trace). Let us now define a CPTP map $\Lambda$ acting as $\Lambda(\rho) = \frac{1}{n}\sum_{i=1}^n \tr_{/i}\rho$. Therefore, $\sigma = \Lambda(\sigma^{\ot n})$ and $\widetilde{\sigma} = \Lambda(\sigma^n)$. Using again the fact that trace distance is contractive under CPTP maps we can further write: 
\begin{align}
    2 \norm{\sigma^n - \sigma^{\ot n}}_1 + \norm{\sigma - \widetilde{\sigma}}_1 &= 2 \norm{\sigma^n - \sigma^{\ot n}}_1 + \norm{\Lambda(\sigma^{\ot n}) - \Lambda(\sigma^n)}_1   \\
    &\leq  3 \norm{\sigma^n - \sigma^{\ot n}}_1 \\
    & \leq 3 \epsilon = \mathcal{O}(\epsilon).
\end{align}
This proves the claim.
\end{proof}
Therefore, the amount of correlations built between the system and the catalyst, as measured by the trace distance, can be made arbitrarily small by increasing the size of the catalyst (as specified by $n$).

\section{Proof of Lemma 1}
Let us begin the proof by recalling the following well-known fact \cite{PhysRevA.63.012307}:
\begin{lemma}
\label{lem:app1}
For any two pure states $\psi_{AB}$ and $\psi'_{AB}$ there exists $\mathcal{E} \in \locc(A:B)$ such that for all $\epsilon > 0$ and sufficiently large $n$:
\begin{align}
    \mathcal{E}(\psi_{AB}^{\ot n}) = {\widehat{\psi}^{n}}_{AB} \quad \text{s.t.} \quad  \norm{{\widehat{\psi}^{n}}_{AB} - (\psi'_{AB})^{\ot n}}_1 \leq \epsilon,
\end{align}
if and only if:
\begin{align}
    S(\rho_A) \geq S(\rho_A'),
\end{align}
where $\rho_{A} = \tr_B \dyad{\psi_{AB}}$ and $\rho_{A}' = \tr_B \dyad{\psi_{AB}'}$.
\end{lemma}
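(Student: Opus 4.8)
The plan is to reduce the statement to the reversible asymptotic theory of bipartite pure-state entanglement, in which the entropy of entanglement $S(\rho_A)$ is the unique asymptotic conversion rate. I would prove the two implications separately, measuring approximation by the trace norm and using its contractivity under CPTP maps together with the triangle inequality to keep error accumulation under control.

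For sufficiency ($S(\rho_A) \geq S(\rho_A')$ implies the map exists) I would chain entanglement concentration with entanglement dilution. Entanglement concentration (Bennett, Bernstein, Popescu and Schumacher) applied to $\psi_{AB}^{\otimes n}$ yields, by LOCC and for any $\delta > 0$ at large $n$, a state within trace distance $\delta$ of $(\phi^+)^{\otimes m}$ with $m \geq n(S(\rho_A) - \delta)$. Entanglement dilution (Lo and Popescu) then converts $(\phi^+)^{\otimes m}$ by LOCC into a state within $\delta$ of $(\psi_{AB}')^{\otimes k}$ with $k \geq m/(S(\rho_A') + \delta)$. Since $S(\rho_A) \geq S(\rho_A')$, taking $\delta$ small and $n$ large forces $k \geq n$, so one discards the surplus $k - n$ copies by a partial trace. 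The composite LOCC map $\mathcal{E}$ then obeys $\lVert \mathcal{E}(\psi_{AB}^{\otimes n}) - (\psi_{AB}')^{\otimes n}\rVert_1 \leq \mathcal{O}(\delta)$ after a few applications of the triangle inequality and contractivity, and this is made at most $\epsilon$.

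For necessity (existence of the map implies $S(\rho_A) \geq S(\rho_A')$) I would invoke an entanglement measure $E$ that is non-increasing under LOCC, asymptotically continuous, and coincides with $S(\rho_A)$ on pure states while being additive on pure tensor powers; the entanglement of formation qualifies, since on pure states it equals the entropy of entanglement and hence $E(\psi_{AB}^{\otimes n}) = n S(\rho_A)$ and $E((\psi_{AB}')^{\otimes n}) = n S(\rho_A')$. Monotonicity gives $E(\widehat{\psi}^n_{AB}) \leq n S(\rho_A)$, while a Fannes--Audenaert-type continuity bound $|E(\rho) - E(\sigma)| \leq \epsilon\, n \log d + h(\epsilon)$ (with $d$ the local dimension) applied to $\lVert \widehat{\psi}^n_{AB} - (\psi_{AB}')^{\otimes n}\rVert_1 \leq \epsilon$ gives $E(\widehat{\psi}^n_{AB}) \geq n S(\rho_A') - \epsilon\, n \log d - h(\epsilon)$. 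Combining and dividing by $n$ yields $S(\rho_A) \geq S(\rho_A') - \epsilon \log d - h(\epsilon)/n$.

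The main obstacle is precisely this converse continuity estimate: the asymptotic-continuity penalty grows like $\epsilon\, n \log d$, linearly in $n$, so it does not vanish under $n \to \infty$ at fixed $\epsilon$. The fix is to respect the quantifier order of the statement, treating $\epsilon$ as a free parameter and sending $\epsilon \to 0$ after $n \to \infty$, which removes the residual $\epsilon \log d$. A secondary subtlety is that $\widehat{\psi}^n_{AB}$ is in general mixed, so one cannot use the pure-state entropy of entanglement directly and must verify that the chosen mixed-state monotone satisfies all three properties above (monotonicity, asymptotic continuity, and additivity on pure tensor powers), the last of which is immediate because it reduces to the entropy of entanglement there.
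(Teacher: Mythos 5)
The paper itself never proves this lemma---it is imported as a ``well-known fact'' with a citation to \cite{PhysRevA.63.012307}---so what you are really doing is reconstructing the literature argument. Your necessity direction does this correctly: entanglement of formation is an LOCC monotone, it equals the entropy of entanglement on $\psi_{AB}^{\ot n}$ and $(\psi'_{AB})^{\ot n}$ because both are pure (so no appeal to general additivity, which is false, is needed), and the asymptotic-continuity penalty $\epsilon\, n\log d + h(\epsilon)$ is neutralized by the correct quantifier order, $n\to\infty$ at fixed $\epsilon$ and only then $\epsilon\to 0$. That half is sound.

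The sufficiency direction, however, has a genuine gap at the boundary $S(\rho_A) = S(\rho_A')$, and your own arithmetic exposes it: concentration followed by dilution yields $k \geq n\,(S(\rho_A)-\delta)/(S(\rho_A')+\delta)$ output copies, which for equal entropies is strictly \emph{less} than $n$ for every $\delta>0$, so ``taking $\delta$ small and $n$ large forces $k\geq n$'' is false precisely there. This is not a repairable technicality of your route: converting $\psi^{\ot n}$ into $(\psi')^{\ot n}$ with vanishing error at equal entropies is known to be impossible whenever the Schmidt spectra differ---entanglement dilution from exactly $nS(\rho_A')$ ebits fails, since an extra $\Omega(\sqrt{n})$ ebits are unavoidable (Hayden--Winter, Harrow--Lo), and Kumagai--Hayashi's second-order analysis shows the conversion fidelity then converges to a constant strictly below one. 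So the lemma as literally stated (exactly $n$ copies out) is false on that boundary; what \cite{PhysRevA.63.012307} actually proves allows the output to be $(\psi')^{\ot m}$ with $m = n - o(n)$, and with that slack your concentration--dilution argument goes through for all $S(\rho_A)\geq S(\rho_A')$. The distinction is material here, not pedantic: the paper's worked example invokes the lemma at exactly equal entropies (both $\log 2$), which is precisely the excluded case; the downstream Lemma 1 of the main text still survives, because the entanglement fraction is continuous and one can approach $\psi'$ by states of strictly smaller marginal entropy and pass to the closure---but a correct proof must either restrict to $S(\rho_A) > S(\rho_A')$ and add that closure argument, or adopt the $m = n - o(n)$ formulation throughout.
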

Let $n$ be the smallest possible number of copies such that there exists a transformation $\mathcal{E} = \mathcal{E}^*$ from Lemma \ref{lem:app1}. Using this as our educated guess for the optimisation in the definition of $f_{\text{reg}}(\rho_{AB})$ we have that for all sufficiently large $n$:
\begin{align}
    f_{n}(\psi^{\ot n}_{AB}) &\geq \sum_{i=1}^{n} \langle \phi^+|\tr_{/i} \mathcal{E}^*(\psi_{AB}^{\ot n})|\phi^+\rangle \\
    &= \sum_{i=1}^{n} \langle \phi^+|\tr_{/i} \widehat{\psi}_{AB}^{n}|\phi^+\rangle \\
    &\geq\sum_{i=1}^{n} \left(\langle \phi^+|\psi_{AB}'|\phi^+\rangle - \epsilon \right) \\
    &= n f(\psi_{AB}') - n \epsilon.
\end{align}
Using this and the fact that $\epsilon$ can be made arbitrarily small we can infer that the regularised entanglement fraction can be lower-bounded by:
\begin{align}
    f_{\text{reg}}(\psi_{AB}) \geq f(\psi_{AB}')
\end{align}
for all $ \psi_{AB}'$ such that $S(\rho_{A}) \geq S(\rho_{A}')$. 

\section{\chgR{Catalytic advantage for small catalysts}}
In this section we give an explicit example demonstrating that catalytic advantages in teleportation can be demonstrated for low-dimensional states. More specifically, let us consider teleporting qutrit states ($d_R = 3$) using a pair of qutrits  $A_1B_1$ with $\text{dim}(A_1) = \text{dim}(B_1) = 3$, prepared in a maximally-entangled state on a two-level support. In addition, Alice and Bob also share a pair of qutrits, $A_2B_2$, with $\text{dim}(A_2) = \text{dim}(B_2) = 3$, and a classical register $M$ with $\text{dim}(M) = 2$, both of which acting as a catalyst. More specifically, consider $0 \leq x \leq 1$ and take
\begin{align}
    \ket{\psi}_{A_1B_1} = \frac{1}{\sqrt{2}} \left(\ket{00} + \ket{11} \right)_{A_1B_1}, \qquad \omega_{A_2B_2M} = \frac{1}{2} {\gamma}_{A_2B_2}  \ot \dyad{1}_M + \frac{1}{2} \dyad{\psi}_{A_2B_2} \ot \dyad{2}_M,
\end{align}
where $\gamma_{A_2B_2} = x \dyad{\phi_+}_{A_2B_2} + (1-x) \dyad{00}_{A_2B_2}$, and $\ket{\phi_+}_{A_2B_2} = \frac{1}{\sqrt{3}} \left(\ket{00} + \ket{11} + \ket{22}\right)_{A_2B_2}$. The key observation is that we can find a (relatively simple) map $\mathcal{E} \in \locc(A_1A_2:B_1B_2)$ that transforms $\ket{\psi}_{A_1B_1}\ket{\psi}_{A_2B_2}$ into a pure state
\begin{align}
    \label{eq:psi_tilde_def}
    \ket{\widetilde{\varphi}}_{A_1B_1A_2B_2} = \sqrt{x} \ket{00}_{A_1B_1} \ket{\phi_+}_{A_2B_2} + \sqrt{1-x} \ket{11}_{A_1B_1} \ket{00}_{A_2B_2}.
\end{align}
This can be checked verified by calculating the vectors of squared Schmidt coefficients, $\lambda$. More specifically, we have
\begin{align}
    \lambda(\ket{\psi}^{\ot 2}) &= \left[\frac{1}{4}, \frac{1}{4}, \frac{1}{4}, \frac{1}{4}, 0, 0, 0, 0, 0\right], \\
    \lambda(\ket{\widetilde{\varphi}}) &= \left[\frac{x}{3}, \frac{x}{3}, \frac{x}{3}, 1-x, 0, 0, 0, 0, 0 \right].
\end{align}
Nielsen's theorem \cite{Nielsen_1998} states that $\ket{\psi}^{\ot 2}$ can be transformed into $\ket{\widetilde{\varphi}}$ using a one-way LOCC if and only if $\lambda(\ket{\psi}^{\ot 2})$ is majorized by $\lambda(\ket{\widetilde{\varphi}})$, which is the case in our construction for all $x \geq 3/4$. Moreover, without loss of generality, $\mathcal{E}$ takes the form
\begin{align}
    \mathcal{E}\left( \rho \right) = \sum_i \left(E_i \ot U_i \right) \rho \left(E_i \ot U_i \right)^{\dagger},
\end{align}
for $i = 1, \ldots, d_A^2 = 9$ measurement operators $E_i$ satisfying $\sum_i E_i^{\dagger} E_i = \mathbb{1}$ and a collection of unitaries $\{U_i\}_i^{d_A^2}$. Therefore the protocol $\mathcal{E}$ requires at most $\log_2 9$ bits of communication from Alice to Bob, corresponding to communicating which of the measurement outcomes Alice has obtained. The explicit forms of these measurements and unitaries can be found using e.g. the methods described in Sec. $4.3$ in \cite{Nielsen2001}. 

Returning to Eq. (\ref{eq:psi_tilde_def}) we can verify that $\tr_{A_1B_1} \dyad{\widetilde{\varphi}}_{A_1B_1A_2B_2} = \gamma_{A_2B_2}$. The explicit subroutine for increasing entanglement fraction of $\ket{\psi}_{A_1B_1}$ is as follows. 

We start with a state of the form
\begin{align}
    \label{eq:app_small_dim_joint}
    \dyad{\psi}_{A_1B_1} \ot \omega_{A_2B_2 M} = \frac{1}{2} \left(\dyad{\psi}_{A_1B_1} \ot \gamma_{A_2B_2} \ot \dyad{1}_M + \dyad{\psi}_{A_1B_1} \ot \dyad{\psi}_{A_2B_2} \ot \dyad{2}_M \right). 
\end{align}
Alice measures the classical register $M$. If the register is in state $\dyad{1}_M$, she communicates the outcome to Bob and the parties do nothing and proceed to the next iteration of the protocol. When the register is in state $\dyad{2}_M$, Alice and Bob apply the map $\mathcal{E}$ to $A_1B_1A_2B_2$. This results in transforming the joint state from Eq. (\ref{eq:app_small_dim_joint}) into
\begin{align}
    \label{eq:app_small_dim_joint2}
    \frac{1}{2} \left(\dyad{\psi}_{A_1B_1} \ot \gamma_{A_2B_2} \ot \dyad{1}_M + 
    \dyad{\widetilde{\varphi}}_{A_1B_1A_2B_2} \ot \dyad{2}_M \right).
\end{align}
Alice and Bob now relabel their systems according to the following recipe:
\begin{align}
    (\cdot)_{A_1B_1 A_2 B_2} \ot \dyad{1}_M & \rightarrow (\cdot)_{A_2 B_2 A_1B_1} \ot \dyad{2}_M, \\\
  (\cdot)_{A_1B_1 A_2 B_2} \ot \dyad{2}_M & \rightarrow (\cdot)_{A_1B_1 A_2 B_2} \ot \dyad{1}_M.
\end{align}
This is only done to make our bookkeeping simpler and does not involve any physical processing. After this procedure, the joint state from Eq. (\ref{eq:app_small_dim_joint2}) can be written as
\begin{align}
    \dyad{\psi'}_{A_1B_1A_2B_2} := \frac{1}{2} \left( \dyad{\widetilde{\varphi}}_{A_1B_1A_2B_2} \ot \dyad{1}_M + \gamma_{A_1B_1} \ot \dyad{\psi}_{A_2B_2} \ot \dyad{2}_M \right).
\end{align}
It can be readily verified that the marginals satisfy
\begin{align}
    \tr_{A_1B_1} \dyad{\psi'}_{A_1B_1A_2B_2} &= \frac{1}{2} \left(\gamma_{A_2B_2} \ot \dyad{1}_M + \dyad{\psi}_{A_2B_2} \ot \dyad{2}_{M} \right), \\
     \tr_{A_2B_2M} \dyad{\psi'}_{A_1B_1A_2B_2} &= \frac{1}{2}  \left( \gamma'_{A_1B_1} + \gamma_{A_1B_1} \right),
\end{align}
where 
\begin{align}
    \gamma'_{A_1B_1} &:= \tr_{A_2B_2} \dyad{\widetilde{\varphi}}_{A_1B_1A_2B_2} \\ 
    &= x \dyad{00}_{A_1B_1} + (1-x) \dyad{11}_{A_1B_1} + \frac{\sqrt{x(1-x)}} {\sqrt{3}} \dyad{00}{11} + \frac{\sqrt{x(1-x)}} {\sqrt{3}} \dyad{11}{00}.
\end{align}
After tracing out the catalyst, Alice and Bob obtain the state
\begin{align}
    \rho_{A_1B_1}' = \frac{1}{2}(\gamma_{A_1B_1} + \gamma'_{A_1B_1}).
\end{align}
It can be verified that the  entanglement fraction of $\rho_{A_1B_1}'$ is given by
\begin{align}
    f(\rho_{A_1B_1}') = \langle \phi_+|\rho_{A_1B_1}'|\phi_+\rangle = \frac{1}{3}\left(1+\sqrt{\frac{x(1-x)}{3}} + x\right).
\end{align}
Optimizing this expression over all $x \in [\frac{3}{4}, 1]$ gives $x^* = \frac{1}{2} + \frac{\sqrt{3}}{4} \approx \frac{14}{15}$. The resulting entanglement fidelity $f(\rho_{A_1B_1}^*) = \frac{1}{2} + \frac{\sqrt{3}}{9} \approx 0.692$. Note that this choice fixes the state of the catalyst $\omega_{A_2B_2R}$, via $\gamma_{A_2B_2}$.

The overall procedure therefore allows the entanglement fraction of the shared entangled state to be (strictly) increased. Interestingly, the resulting state is mixed, as opposed to the initial pure state $\ket{\psi}_{A_1B_1}$. Using this preprocessed state in the standard teleportation protocol leads to the fidelity of teleportation
\begin{align}
    \langle F \rangle_{\rho^*} = \frac{ f(\rho_{A_1B_1}^*)  d_R + 1}{d_R + 1} &= \frac{3}{4} + \frac{1}{4}\left(\frac{1}{\sqrt{3}} - \frac{1}{2}\right)\approx 0.77 \\
    &> \frac{3}{4} = \langle F \rangle_{\psi}.
\end{align}
Entanglement catalysis can therefore improve the quality of the arising teleportation channel even when the system and the catalyst are small-dimensional. 

\section{\chg{Proof of Theorem 2}}
Our goal is show that there exist a channel $\mathcal{D} \in \mathscr{C}$ and a quantum state $\omega$ such that the following holds
\begin{align}   
    \label{eq:app_thm2_proof1}
    \min_{\mathcal{D} \in \mathscr{C}} \,\, \tr \left[\mathcal{D}(\rho_S \ot \omega_C)(O_S \ot \mathbb{1}_C)\right] &= \min_{\mathcal{E} \in \mathscr{C}}\,\, \frac{1}{n} \tr[\mathcal{D}(\rho_S^{\otimes n}) O_S^{\otimes n}] \\
    \tr_{C}\left[ \mathcal{D}(\rho_S \ot \omega_C)\right]&= \omega.
\end{align}
Let $\mathcal{E} \in \mathscr{C}$ be an arbitrary channel from the class $\mathscr{C}$ acting on $n$ copies of $\rho$ and consider the following catalyst [see also Eq. $(10)$ from the main text], i.e.
\begin{align}
    \omega_C = \frac{1}{n}\sum_{i=1}^{n} {\rho^{\ot i} \ot \sigma^{n-i}}\ot \dyad{i},
\end{align}
with $\sigma^n := \mathcal{E}(\rho^{\ot n})$ and $\sigma^{n-i} := \tr_{1:i} \sigma^n$ defined as before, with $n$ being an arbitrary natural number. Let us for simplicity label the quantum systems comprising the catalyst with $C_i$ and $M$ denote the classical register, i.e. $C = C_2\ldots C_n M$ and let us also label $S \equiv C_1$. The channel $\mathcal{D}$ acting on the system and the catalyst can be specified similarly as in Appendix $A$ using the following set of steps:
\begin{itemize}
    \item[(a)] A multi-copy transformation $\mathcal{E} \in \mathscr{C}$, is applied conditioned on the classical register of the catalyst, i.e
    \begin{align}
        \id_{C_1\ldots C_n} (\cdot) \ot \sum_{i=1}^{n-1} \langle i |\cdot |i \rangle_{M}+ \mathcal{E}_{C_1 \ldots C_n}(\cdot) \ot \langle n | \cdot | n \rangle_{M}.
    \end{align}
    \item[(b)] Cyclic relabelling of the classical register in the catalyst.
    \begin{align}
        \ket{1}_{M} \rightarrow \ket{2}_{M}, \qquad \ket{2}_{M} \rightarrow \ket{3}_{M}, \qquad \ldots, \qquad \ket{n}_{M} \rightarrow \ket{1}_{M}
    \end{align}
    \item[(c)] Cyclic relabelling of $SC = C_1 \ldots C_nM$ (system + catalyst).
    \begin{align}
        C_1 \rightarrow C_2 \qquad C_2 \rightarrow C_3, \qquad \ldots, \qquad C_n \rightarrow \rm C_1.
    \end{align}
\end{itemize}
It can be easily verified that a channel $\mathcal{D}$ defined in this way is in the class $\mathscr{S}$ and leaves the state of the catalyst invariant, i.e.
\begin{align}
    \tr_{S}\left[ \mathcal{D}(\rho_S \ot \omega_C)\right] = \omega_C.
\end{align}
Moreover $\mathcal{D}$ induces the following effective transformation on the system:
\begin{align}
    \label{eq:subr}
    \tr_C [\mathcal{D}(\rho_S \ot \omega_C)] = \frac{1}{n} \sum_i^{n} \tr_{/i}[\mathcal{E}(\rho_S^{\otimes n})].
\end{align}
Using this in Eq. (\ref{eq:app_thm2_proof1}) leads to the following equality
\begin{align}
    \label{eq:app_thm2_proof2}
    \tr \left[\mathcal{D}(\rho_S \ot \omega_C)(O_S \ot \mathbb{1}_C)\right] = \tr\left[ O_S \cdot \tr_C \mathcal{D}(\rho_S \ot \omega_C) \right] = \frac{1}{n} \tr[\mathcal{E}(\rho_S^{\otimes n}) O_S^{\otimes n}].
\end{align}
where $O^{\otimes n} \equiv \sum_{i=1}^n \mathbb{1}_{/i} \otimes O_{i}$. Since our choice of $\mathcal{E}$ was arbitrary, we can optimise both sides of Eq. (\ref{eq:app_thm2_proof2}) over all channels $\mathcal{E} \in \mathscr{C}$. This proves the theorem. 

\section{\chg{Application of Theorem 2 (work extraction in quantum thermodynamics)}}

To demonstrate Theorem $2$ we now apply it to the problem of work extraction in quantum thermodynamics. 

More specifically, let $\mathscr{C}$ be the set of all unitary channels and let $\mathcal{E}(\cdot) = U(\cdot)U^{\dagger}$ for some unitary matrix $U$. Moreover, let the observable be the system's total energy, that is $O_S = H_S$, where $H_S$ is the Hamiltonian associated with $\rho_S$. The goal is to lower the energy of system $S$. Since the whole process is unitary, no entropy is produced, and we can associate the total energy change on the system with thermodynamic work. The natural quantity that captures this is the so-called \emph{ ergotropy}, defined as
\begin{align}
    \label{eq:erg}
    \mathcal{W}_{\mathscr{C}}(\rho) = \max_{U_S} \left[ \tr[H_S \rho_S]- \tr[H_S U_S \rho_S U_S^{\dagger}] \right].
\end{align}
Importantly, the above is often smaller than the free energy of $\rho_S$, meaning that quantum states may contain a form of ``locked'' energy that cannot be extracted in a reversible process (so-called passive states, see e.g. \cite{Pusz1978}). 

Let us now use Theorem $2$ to find the amount of work that can be extracted using the help of some catalyst. To do so we specify Eq. ($23$) from the main text to our current setting, i.e.
\begin{align}
    \label{eq:app_thm2_app1}
    \max_{U_{SC}} \left[\tr[H_S \rho_S] -  \tr[U_{SC}(\rho_S \ot \omega_C)U_{SC}^{\dagger}(H_S \ot \mathbb{1}_C)]\right] =  \max_{\widetilde{U}_S} \left[ \tr[H_S \rho_S]- \frac{1}{n} \tr[\widetilde{U}_S\rho_S^{\otimes n}\widetilde{U}_S^{\dagger} H_S^{\otimes n}]  \right],
\end{align}
where $\widetilde{U}_S$ is an arbitrary unitary acting on $n$ copies of system $S$. Moreover, using Theorem $2$ we can also infer that $\tr_{S}\left[U_{SC}(\rho_S \ot \omega_C)U^{\dagger}_{SC}\right] = \omega_C$.

Interestingly, as shown in \cite{Alicki_2013}, the second term on the r.h.s. of Eq. (\ref{eq:app_thm2_app1}) approaches the minimal energy for fixed entropy, that is 
\begin{align}
    \lim_{n \rightarrow \infty} \frac{1}{n} \tr[U_S\rho_S^{\otimes n}U_S^{\dagger} H_S^{\otimes n}] = \tr[H_S \tau_S]
\end{align}
where $\tau_S = e^{-\beta H_S}/\tr[e^{-\beta H_S}]$ is the Gibbs state and $\beta$ is chosen such that the von Neumann entropy of $\tau_S$ is equal to that of $\rho_S$. In other words, when using catalysts one can unitarily extract all free energy locked in a quantum state. In fact, our choice of unitary operations was arbitrary and a similar analysis can be performed for the general notion of passivity and ergotropy, i.e. in Eq. (\ref{eq:erg}) one can choose any reasonable set of operations $\mathscr{C}$, for example the notion of local passivity considered in Ref. \cite{PhysRevLett.123.190601}.

\end{document}